\newcommand\dsst\displaystyle
\newtheorem{thm}{THEOREM}[section]
\newtheorem{lemma}[thm]{LEMMA}
\newtheorem{fact}[thm]{FACT}
\newtheorem{cor}[thm]{COROLLARY}
\newtheorem{obs}[thm]{OBSERVATION}
\newtheorem{defn}[thm]{DEFINITION}
\DeclarePairedDelimiter\floor{\lfloor}{\rfloor}
\DeclarePairedDelimiter\ceil{\lceil}{\rceil}
\DeclarePairedDelimiter\nint{||}{||}
\title{Optimal Tuning of Two-Dimensional Keyboards}
\author{Aricca Bannerman, James Emington, Anil Venkatesh}
\date{}							
\begin{document}
\maketitle
\begin{abstract}
We give a new analysis of a tuning problem in music theory, pertaining specifically to the approximation of harmonics on a two-dimensional keyboard. We formulate the question as a linear programming problem on families of constraints and provide exact solutions for many new keyboard dimensions. We also show that an optimal tuning for harmonic approximation can be obtained for any keyboard of given width, provided sufficiently many rows of octaves.
\end{abstract}

\section{Introduction}
In music theory, a temperament is a system of tuning that is generated by one or more regular pitch intervals. One of the primary factors in choosing a temperament is the close approximation of the harmonic sequence.  The Miracle temperament, discovered by George Secor in 1974, is a two-dimensional temperament that approximates the first eleven harmonics unusually well. In this paper, we give a new analysis of Secor's Miracle temperament problem. We formulate the question as a linear programming problem on families of constraints and provide exact solutions for keyboards of dimensions up to $15 \times 100$. We further prove that for any keyboard of given width, there exists a universal best temperament that is realizable with finitely many rows of keys.

In Section 2 of the article, we establish definitions and give historical context for the problem of approximating the harmonic sequence.  In Section 3, we present Secor's mathematical model of harmonic approximation and highlight two ways of extending the approach. In Section 4, we prove various technical lemmas that bound the complexity of our search algorithm, and outline the algorithm used to search for candidate solutions.  In Section 5, we present and analyze our results and give a proof of the main theorem.


\section{Background and Definitions}
We first establish definitions of several key terms from music theory.
\begin{itemize}
\item%
Given a musical pitch of frequency $f$, its $n$-th \emph{harmonic} is the pitch of frequency $n\cdot f$.  In general, any musical note produced by a string or wind instrument will consist of a superposition of frequencies: a fundamental frequency and some of its positive integer multiples.  
\item%
The difference between two frequencies $f_1$ and $f_2$, measured in \emph{cents}, is given by $1200 \log_2(f_2/f_1)$, where the factor of 1200 serves to normalize the interval between two adjacent piano notes to 100 cents.
\item%
Given a musical pitch of frequency $f$, its \emph{pitch class} is the set of all frequencies of the form $2^n \cdot f$ for integers $n$. Musically, this represents the set of all pitches that differ from the original pitch by a whole number of octaves.
\end{itemize}

When multiple notes are played at once, their harmonics may align, resulting in consonance, or they may clash, resulting in dissonance.  For this reason, a well designed tuning system must include reasonable approximations of the harmonics.  The most common tuning in Western music is twelve-tone equal temperament (12-TET), which consists of twelve equal subdivisions of the octave.  Since the generator of 12-TET subdivides the octave, this tuning obtains the second harmonic (the octave) exactly.  By comparison, the third harmonic is not obtained exactly in 12-TET, but is still quite closely approximated.  Nineteen steps of the generator of 12-TET results in a frequency that is $2^{19/12}$ or 2.9966 times the fundamental frequency, quite close to the third harmonic's multiple of 3.  The cent-wise difference in pitches is just -1.955, less than an untrained ear can detect.

Not every harmonic is well approximated by 12-TET.  Table \ref{tab:12tet-pyth} displays the deviation of 12-TET from each of the first eleven harmonics, excluding those that are obtained exactly.  If instead of 12-TET we consider the tuning generated by the octave and the third harmonic, the result is termed Pythagorean tuning.  By construction, this system obtains several harmonics exactly; however, it has even worse approximation of others.  
\begin{table}[h]
\centering
\renewcommand{\arraystretch}{1.25}%
\begin{tabular}{c|c|c}
Harmonic & 12-TET Deviation &  Pythagorean Deviation \\
\hline
3 &  -1.955  & 0.000 \\
\hline
5 &  +13.686 &  +21.506 \\
\hline
7 &  +31.174 &  +27.264 \\
\hline
9 &  -3.910 &  0.000 \\
\hline
11 & +48.682 & +60.412
\end{tabular}
\caption{Comparison of 12-TET and Pythagorean Harmonic Approximation.}
\label{tab:12tet-pyth}
\end{table}

In practice, the worst deviation determines the quality of the tuning system, since just one discordant note can ruin a chord.  The objective therefore is to determine a tuning system whose worst deviation from the harmonic sequence is minimized.  This observation leads to the following definition.  Any temperament achieves some closest approximation of each harmonic.  The magnitude of the difference between a given harmonic and its closest approximation represents the \emph{deviation} of the temperament from that harmonic.

\begin{defn} \label{def:harmonic-dev}
For a given finite set of harmonics, the harmonic deviation of a temperament is the largest of the deviations of the temperament from each of the given harmonics.
\end{defn}


George Secor was interesting in finding equal temperaments with particularly small harmonic deviation from the first eleven harmonics (the cutoff of eleven is chosen for historical reasons \cite{partch49}).  He first examined various families of temperaments that jointly provided good approximations of the harmonic sequence, eventually settling on the family of temperaments with 31, 41, and 72 equal subdivisions \cite{secor06}.  Noting that these three tuning systems nearly coincide around 116 cents, he reasoned that a single two-dimensional temperament generated by the octave and an interval near 116 cents might have similar harmonic deviation to that of the 31, 41, 72 family. For practical reasons, he restricted his search to harmonics in relatively nearby octaves to the fundamental pitch. Given these restrictions, Secor determined that the two-dimensional temperament generated by the octave and the interval $(18/5)^{1/19}$ (116.716 cents) obtains an approximation of the first eleven harmonics with deviation of no more than 3.322 cents, a much better result than is obtained by 12-TET \cite{secor75}.  Table \ref{tab:12tet-miracle} provides a full comparison of the harmonic properties of 12-TET and Secor's Miracle temperament.
\begin{table}[h]
\centering
\renewcommand{\arraystretch}{1.25}%
\begin{tabular}{c|c|c}
Overtone & 12-TET Deviation & Miracle Deviation \\
\hline
3 & -1.955 & -1.658 \\
\hline
5 & +13.686 & -3.322 \\
\hline
7 & +31.174 & -2.257 \\
\hline
9 & -3.910 & -3.322 \\
\hline
11 & +48.682 & -0.591
\end{tabular}
\caption{Comparison of 12-TET and Miracle Harmonic Approximation.}
\label{tab:12tet-miracle}
\end{table}

\section{Mathematical Model}
\subsection{Secor's Approach}
Secor's derivation made use of the following mathematical model.  Given a two-dimensional temperament generated by the octave and some second interval $x$, we wish to determine values for $x$ that minimize the harmonic deviation of the tuning. The deviation from each harmonic is represented by a linear function in $x$.  For example, consider the third harmonic that is $1200 \cdot \log_2(3) = 1901.955$ cents above the fundamental pitch; this harmonic can be reached from the fundamental frequency by adding one octave plus six steps of size 116.992 cents. For $x$ near 116.992, the cent deviation from the third harmonic is accordingly given by $6(x-116.992)$.  Each of the first eleven harmonics imposes such a linear constraint on $x$.  However, the following observation establishes that even-valued harmonics are redundant in the analysis, leaving only the five odd harmonics between 3 and 11.

\begin{obs}
If a temperament has the octave as a generator, then its deviations from the odd-valued harmonics completely determine its harmonic deviation.
\end{obs}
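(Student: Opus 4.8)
The plan is to reduce the deviation from each harmonic to the deviation from its odd part, using the fact that the octave is a generator. First I would fix notation: write the temperament as the set $S$ of all pitches (measured in cents above the fundamental) reachable as integer combinations of the octave and the second generator $x$, and define, for any target pitch $t \in \RR$, the deviation $d(t) = \min_{p \in S} |t - p|$, the cent-distance from $t$ to the nearest pitch of the temperament. With this notation the deviation of the temperament from the $n$-th harmonic is exactly $d(1200 \log_2 n)$, since the $n$-th harmonic lies $1200 \log_2 n$ cents above the fundamental.

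The crux is a translation-invariance property of $d$. Because the octave ($1200$ cents) is a generator, the set $S$ is invariant under the shift $p \mapsto p + 1200$; consequently $d(t + 1200) = d(t)$ for every $t$, so $d$ depends only on the pitch class of $t$, that is, on $t \bmod 1200$. I would then observe that the $2n$-th harmonic sits exactly one octave above the $n$-th harmonic, since $1200\log_2(2n) = 1200\log_2 n + 1200$. Applying the invariance yields $d(1200\log_2(2n)) = d(1200\log_2 n)$: the deviation from an even harmonic equals the deviation from the harmonic with half its index. Iterating, any harmonic $n = 2^k m$ with $m$ odd has the same deviation as the odd harmonic $m$.

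To finish I would take the maximum over the first eleven harmonics. The even indices $2, 4, 6, 8, 10$ have odd parts $1, 1, 3, 1, 5$ respectively, and the fundamental (index $1$, and hence indices $2, 4, 8$) is reached exactly, so $d(0) = 0$. Thus every even harmonic contributes either $0$ or a value already contributed by an odd harmonic, and the harmonic deviation $\max_{1 \le n \le 11} d(1200\log_2 n)$ collapses to $\max\{\, d(1200\log_2 m) : m \in \{3,5,7,9,11\} \,\}$. This shows the odd-harmonic deviations determine the harmonic deviation, as claimed.

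I expect the only real care needed --- the obstacle, such as it is --- to lie in justifying the shift-invariance of $d$ cleanly: one must confirm that $S$ is genuinely a union of cosets of the octave lattice (which follows from the octave being a generator, so that $p \in S \iff p + 1200 \in S$) and that minimizing $|t - p|$ over this shift-invariant set is therefore unchanged under $t \mapsto t + 1200$. Once that is in hand, the induction on powers of two and the final maximization are routine.
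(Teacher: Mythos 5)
Your proof is correct and takes essentially the same approach as the paper: both arguments factor each even harmonic as $2^k$ times an odd harmonic and use the octave-generation property to translate a best approximation by whole octaves without changing its deviation. Your additive formulation in cents, with the shift-invariant distance function $d$, is just a rephrasing of the paper's multiplicative computation $1200\log_2\bigl(2^k(2n-1)f/(2^k g)\bigr) = 1200\log_2\bigl((2n-1)f/g\bigr)$, with the minor bonus that you explicitly dispose of the harmonics whose odd part is $1$.
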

\begin{proof}
Let $f$ denote the fundamental frequency of the temperament.  Every even-valued harmonic of $f$ has frequency $2^k (2n-1) \cdot f$ for some positive integers $k$ and $n$. Suppose the temperament obtains deviation of $d$ from the $(2n-1)^\mathrm{th}$ harmonic, realized at some frequency $g$.  Because the temperament has the octave as a generator, it also generates the frequency $2^k g$.  But this frequency has deviation of $d$ from the harmonic $2^k (2n-1) \cdot f$, since 
\[1200 \log_2\!\left(\frac{2^k (2n-1) \cdot f}{2^k g}\right) = 1200 \log_2\!\left(\frac{(2n-1) \cdot f}{g}\right) = d. \qedhere\]
\end{proof}

Since each harmonic imposes a linear constraint on $x$, the optimal value for $x$ is the solution to the linear program given by these constraints.  Secor obtained his result by solving the following linear program by hand (Table \ref{tab:secor-lp}).

\begin{table}[h]
\centering
\caption{Secor's Linear Program.}
\renewcommand{\arraystretch}{1.25}%
\begin{tabular}{c|c|c}
Harmonic & Steps & Constraint \\
\hline
3 & 6 & $6(x-116.992)$ \\
\hline
5 & -7 & $-7(x-116.241)$ \\
\hline
7 & -2 & $-2(x-115.587)$ \\
\hline
9 & 12 & $12(x-116.993)$ \\
\hline
11 & 15 & $15(x-116.755)$
\end{tabular}
\label{tab:secor-lp}
\end{table}

As can be seen from Figure \ref{fig:secor-lp}, the generator that minimizes the greatest deviation from the harmonics has value around 116.716.  Since the solution occurs at the intersection of two constraint lines, we can solve for the exact value.
\begin{align}
-7\left(x+\frac{1200}{7}\log_2(5/8)\right) &= 12\left(x-\frac{1200}{12}\log_2(9/4)\right) \nonumber \\ 
x &= (18/5)^{1/19} \approx 116.716. \label{secor-deviation}
\end{align}

\begin{figure}[h]
\centering
\includegraphics[width=5in]{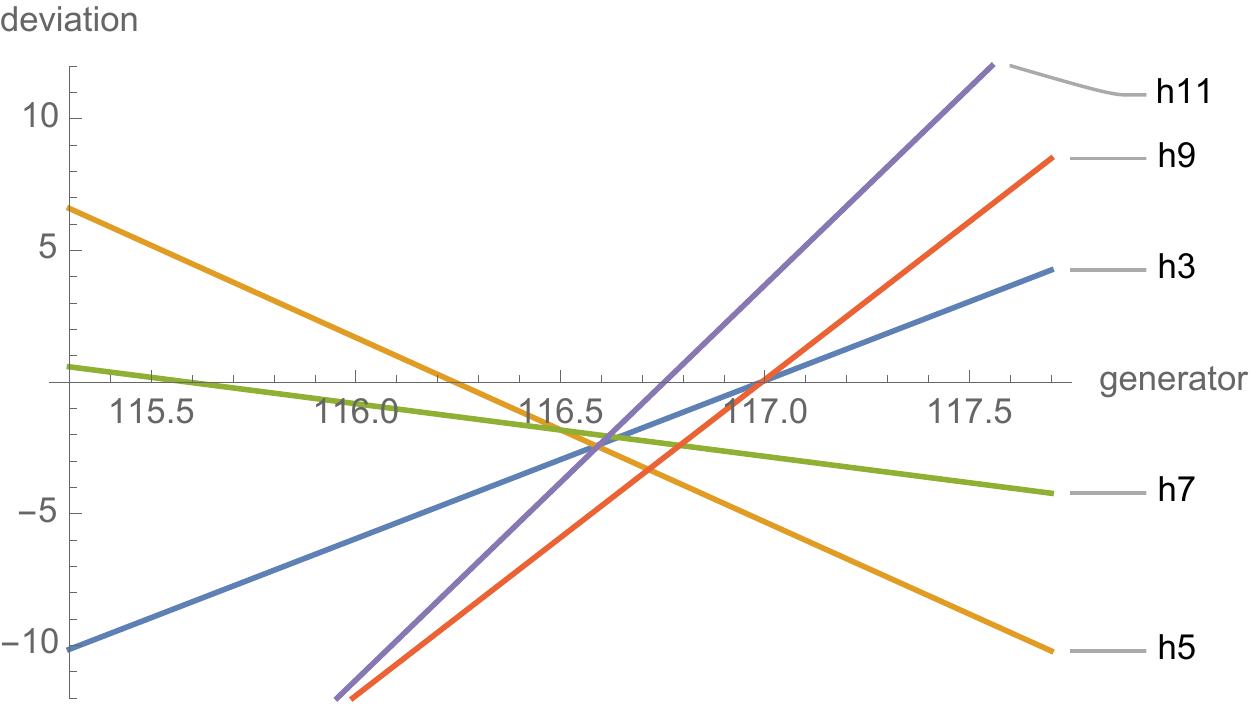}
\caption{Plot of Secor's Linear Program.}
\label{fig:secor-lp}
\end{figure}

Here it should be noted that the absolute value of the deviation is to be minimized, not the nominal value.  In the case of Figure \ref{fig:secor-lp}, it is purely coincidental that the nominal constraints were sufficient to visualize the minimax solution.  More generally, each linear constraint $m(x-x_0)$ is accompanied by its reflection $-m(x-x_0)$ in the linear program, an example of Chebyshev approximation \cite{boy-van04}. For convenience, we introduce the following definition.
\begin{defn}
Given a system of linear constraints of the form $y = m(x-x_0)$, the minimax deviation of the system is the smallest obtainable magnitude of $y$ such that $|y| \geq |m(x-x_0)|$ for each constraint in the system.
\end{defn}

Secor's result can be generalized in two ways.  Firstly, his work pertained only to temperaments with generators near 116 cents. Secondly, his search only considered harmonics that were one or two octaves removed from the fundamental pitch.  Taken together, these two observations underlie the main result of this paper.

\subsection{Broadening the Search}
Since the Miracle temperament has the octave as a generator, this provides an extra degree of freedom when approximating the harmonics.  A given choice of generator $x$ may poorly approximate a harmonic in its natural octave, but nearly coincide with that harmonic in the octave below.  For example, putting $x = 117$ cents provides a poor approximation of the third harmonic at 1901.96 cents:~the nearest miss of 25.95 cents is obtained after sixteen steps.  However, translating the harmonic down an octave to 701.96 cents has a much better result:~six steps of the same $x$-value arrives only 0.05 cents away.  This apparent inconsistency results from the fact that $x$ generally does not subdivide the octave evenly, so the deviation from each harmonic varies depending on the octave it is translated to.  The consequence of this observation is that the search for good harmonic approximation must not be conducted on the level of pitch class, but must instead consider each harmonic and all its octave translates individually.

For each harmonic pitch class, the set of all possible linear constraints is parametrized by two quantities.  The first of these is \textsf{oct}, the number of octaves by which the harmonic is to be translated; the second is \textsf{subdiv}, the number of steps of size $x$ to be taken in attempting to approximate the harmonic (with negative values allowed for descending steps).  In practice, most pairings of these parameters result in very poor approximations. Generally speaking, only one or two values of \textsf{subdiv} are viable for a given value of \textsf{oct}, so our algorithm determines these candidates and discards the others.  This formulation of the problem generalizes Secor's work in two ways: generators distant from 116 cents are considered, and arbitrary octave translates of harmonics are included in the search.

While the search space defined in this way is infinite, not every linear constraint is equally valuable.  Heuristically, the larger the magnitudes of \textsf{oct} and \textsf{subdiv}, the larger a keyboard is required on which to physically realize the corresponding temperament.  It follows that the problem should be solved in terms of the desired keyboard dimensions.  In the following section, we identify the optimal tuning for all reasonable dimensions of keyboard, which we take to be bounded generously by $15 \times 100$.\footnote{A large pipe organ has keyboard dimensions of $4 \times 61$.}



\section{Methods and Computation} \label{section:methods}
Given a keyboard of certain dimensions, the search space for our project is the set of temperaments that are \emph{realizable} on this keyboard.  In order for a temperament to be realizable, its best approximations of the overtones must actually be available on the keyboard.  Heuristically, a temperament whose second generator $x$ is small will require a wider keyboard (i.e.~more steps) in order to reach all the harmonics.  Similarly, a temperament that approximates harmonics in octaves distant from the fundamental frequency will require a taller keyboard (i.e.~more rows).  We make these intuitions precise in the following finiteness lemma, which is essential to constructing the search space for the problem.

\begin{lemma}
For each choice of dimensions $m \times n$, there are only finitely many temperaments realizable on a keyboard of those dimensions.
\end{lemma}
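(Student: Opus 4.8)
The plan is to reduce the statement to a counting problem over a bounded set of integer parameters. By the Observation it is enough to control the five odd harmonics from $3$ to $11$, and each of these is approximated by a pitch of the form $1200\,\textsf{oct} + x\,\textsf{subdiv}$ for integers \textsf{oct} and \textsf{subdiv}. I would identify a temperament realizable on the keyboard with the combinatorial data recording, for each of these five harmonics, the pair $(\textsf{oct},\textsf{subdiv})$ that locates its best approximation among the physical keys. With this encoding it suffices to show that only finitely many such pairs can occur, since the number of ways to assign one admissible pair to each of five harmonics is then finite as well.

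The key step is to convert ``realizable on an $m \times n$ keyboard'' into explicit bounds on \textsf{oct} and \textsf{subdiv}. Identifying the $m$ rows of the keyboard with octave translates and the $n$ columns with signed step counts of the generator, a key exists only when \textsf{oct} lies in a fixed range of at most $m$ consecutive integers and \textsf{subdiv} lies in a fixed range of at most $n$ consecutive integers (negative values permitted for descending steps). Any pair $(\textsf{oct},\textsf{subdiv}) \in \ZZ^2$ outside these ranges names a key that is not present on the keyboard and hence cannot be the site of a best approximation. Consequently each harmonic admits at most $mn$ candidate pairs.

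Taking the product over the five harmonics then bounds the number of admissible assignments, and hence the number of realizable temperaments, by $(mn)^5$, which is finite; if one additionally wishes to pin down a generator, note that each fixed assignment yields a minimax linear program in the single variable $x$ whose optimum is the intersection of two constraint lines, contributing at most one value. The step I expect to require the most care is the reduction in the second paragraph, namely the justification that \textsf{subdiv} may be bounded by the finite width $n$. This is not automatic from the generator alone: a value of $x$ near $0$ would need arbitrarily many steps to reach a fixed harmonic, so one must argue that such generators are simply not realizable because the required keys fall beyond the right edge of the keyboard. Finiteness of the keyboard is exactly what forecloses this possibility and prevents an infinite family of temperaments from accumulating at small $x$.
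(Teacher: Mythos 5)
Your proposal is correct and follows essentially the same route as the paper: encode each realizable temperament by the pairs $(\textsf{oct},\textsf{subdiv})$ locating its best approximation of each of the five odd harmonics, show the keyboard dimensions force both parameters into finite ranges, and count the resulting assignments (the paper's bounds are $-n+1 \leq \textsf{subdiv} \leq n-1$ and $-m+1 \leq \textsf{oct}+\lfloor\log_2 L\rfloor \leq m-1$, giving $[(2n-1)(2m-1)]^5$ rather than your $(mn)^5$, but the constant-factor difference is immaterial to finiteness). Your explicit final steps---multiplying over the five harmonics and noting that each constraint set determines at most one optimal generator---are left implicit in the paper's proof, and are a welcome clarification rather than a departure.
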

\begin{proof}
Since every candidate temperament is obtained as a set of linear constraints, and each constraint is drawn from a family that is parametrized by \textsf{oct} and \textsf{subdiv}, it suffices to show that \textsf{oct} and \textsf{subdiv} are bounded in magnitude. For each harmonic, the parameter \textsf{subdiv} is bounded in terms of $n$, the width of the keyboard.  For example, putting $\textsf{subdiv}=10$ and $x=190$ cents provides a good approximation of the third harmonic (1901.955 cents), and realizing this note requires eleven keys (the fundamental note plus ten steps).  In general, we have $-n+1 \leq \textsf{subdiv} \leq n-1$. The parameter \textsf{oct} is bounded in terms of $m$, the height of the keyboard, but the precise bound depends on which harmonic is used.  For example, the third harmonic naturally lies in the second octave (since $\log_2(3) = 1.585$), so when $\textsf{oct} = -1$, only one row is required.  For harmonic $L$, the number of octaves needed to translate to the first row is given by $-\floor{\log_2(L)}$.  Consequently, the bound on \textsf{oct} is obtained as $-m+1 \leq \textsf{oct} + \floor{\log_2(L)} \leq m-1$.
\end{proof}


Each element of the search space is a set of five linear constraints, one for each of the five odd harmonics between 3 and 11.  Each of these constraints is drawn from a family that is parametrized by \textsf{oct} and \textsf{subdiv}.  Given a keyboard of dimensions $m \times n$, the bounds imposed on \textsf{oct} and \textsf{subdiv} result in $(2n-1)(2m-1)$ constraints in each of the five families.  Consequently, an initial bound on the complexity of the problem is $\left[(2n-1)(2m-1)\right]^5$.  Secor's Miracle temperament is realized on a keyboard of dimensions $3 \times 22$, requiring $459 \times 10^9$ executions of the linear programming function.  At current personal computing speeds, anything more than $10^6$ executions may result in unreasonably long run time.  It is clear that substantial pruning of the search space is necessary in order to make progress.

Most pairings of \textsf{oct} and \textsf{subdiv} result in poor approximations of the harmonics. For example, consider twelve subdivisions of the third harmonic (158.496 cents) paired with twenty subdivisions of the fifth harmonic (139.316 cents).  The corresponding linear constraints are:
\begin{align*}
y &= 12(x-158.496)\\
y &= 20(x-139.316).
\end{align*}
The minimax solution to this system is obtained at $x = 146.508$ cents, with an atrocious deviation of 143.854 cents from the harmonics.  Even if \textsf{subdiv} had taken its minimum magnitude of 1, the optimal deviation would still have been half the difference in $x$-intercepts, or 9.590 cents.  We wish to make precise the intuition that constraints whose $x$-intercepts are relatively distant cannot lead to competitive solutions.  The following lemma sets the stage by establishing a lower bound on \textsf{subdiv} with respect to the $x$-intercept.
\begin{lemma}\label{lemma:subint-bound}
If $y=m(x-x_0)$ represents an arbitrary linear constraint in the search space, then it must hold that $|m| \geq \ceil{\frac{1200}{x_0}\log_2(9/8)}$.
\end{lemma}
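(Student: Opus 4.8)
The plan is to trace each constraint back to the specific harmonic translate it approximates, and then to reduce the inequality to a uniform lower bound on how close any such translate can lie to the fundamental. First I would record the exact meaning of the intercept. A constraint $y=m(x-x_0)$ in the search space arises by approximating some octave translate of one of the five odd harmonics $L\in\{3,5,7,9,11\}$; if that translate lies $T$ cents above the fundamental, then $\textsf{subdiv}=m$ steps of size $x$ produce the deviation $mx-T=m(x-T/m)$, so that $x_0=T/m$ and hence $mx_0=T$. Because the generator is a positive interval we have $x_0>0$, which yields the exact identity $|m|=|T|/x_0$.

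The crux is then a bound on $|T|$ that is uniform over all harmonics and all values of \textsf{oct}. For each odd harmonic $L$, the smallest cents magnitude attainable by any octave translate is $1200$ times the distance from $\log_2 L$ to the nearest integer; concretely, it is the distance from $1200\log_2 L$ to the nearest multiple of $1200$. I would compute these five numbers and observe that the minimum is achieved by the ninth harmonic, whose nearest translate to the fundamental is the whole tone $9/8$ at exactly $1200\log_2(9/8)\approx 203.91$ cents, while every other harmonic translate lies strictly farther away. This establishes $|T|\geq 1200\log_2(9/8)$ for every constraint in the search space.

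Combining the two steps gives $|m|=|T|/x_0\geq \frac{1200}{x_0}\log_2(9/8)$, and since $m$ is an integer this sharpens to $|m|\geq\ceil{\frac{1200}{x_0}\log_2(9/8)}$, as claimed. The only step requiring genuine care is the uniform minimization of $|T|$: although it is just a finite check over five harmonics, one must consider translates on both sides of the fundamental --- for instance, the third harmonic is closer to the fundamental as the fourth $3/4$ below than as the fifth $3/2$ above --- to be certain that $9/8$ really does realize the global minimum. I expect this small case analysis, rather than the algebra, to be the only obstacle.
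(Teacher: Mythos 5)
Your proposal is correct and follows essentially the same route as the paper: the paper likewise reduces the bound to the observation that the closest any octave translate of the five harmonics can lie to the fundamental is the ninth harmonic's translate $9/8$ at $1200\log_2(9/8) \approx 203.910$ cents (tabulating exactly the finite check you describe), and then counts the minimum number of steps of size $x_0$ needed to span that distance. Your version merely makes explicit two points the paper leaves implicit, namely the identity $|m|x_0 = |T|$ and the use of integrality of $m$ to pass to the ceiling.
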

\begin{proof}
The given linear constraint arises from subdividing one of the five harmonics into $m$ steps of size $x_0$.  Harmonics that are more distant from the fundamental frequency require a greater number of steps to reach.  Therefore, a lower bound on the magnitude of $m$ is obtained when the harmonics are translated to be as close to the fundamental frequency as possible (Table \ref{tab:pitch-class}).

\begin{table}[h]
\centering
\renewcommand{\arraystretch}{1.25}%
\begin{tabular}{c|c|c}
Harmonic & \textsf{oct} & Distance (cents) \\
\hline
3 & -2 &-498.045 \\
\hline
5 & -2 & 386.314 \\
\hline
7 & -3 & -231.174 \\
\hline
9 & -3 & 203.910 \\
\hline
11 & -3 & 551.318 
\end{tabular}
\caption{Pitch Class Distance from Fundamental Frequency.}
\label{tab:pitch-class}
\end{table}

In the most compact configuration of harmonics, the ninth harmonic is closest to the fundamental frequency with a deviation of $1200\log_2(9/8)$ or $203.910$ cents.  Given a step size of $x_0$, it takes at least $\ceil{\frac{1200}{x_0}\log_2(9/8)}$ steps to reach this harmonic.
\end{proof}

In order to classify generators as nearby or distant, we introduce the following partition of the octave.  For each positive integer $j$, let the $j$-th subinterval of the octave be $\left[\frac{1200}{j+1},\frac{1200}{j}\right)$.  We would like to discard elements of the search space whose generators span too many of these subintervals; consequently, the following lemma sets forth a lower bound on deviation from the harmonics in terms of $j$.

\begin{lemma} \label{lemma:slope-bound}
Suppose a system of linear constraints has $x$-intercepts in $\left[\frac{1200}{j+1},\frac{1200}{j}\right)$ and $\left[\frac{1200}{j+k+1},\frac{1200}{j+k}\right)$ for positive integers $j$ and $k$.  Suppose further that the slopes of these constraints are $m_1$ and $m_2$, respectively. The minimax deviation of the system is no less than 
\[1200 \left|\frac{m_1 m_2}{m_1-m_2}\right| \cdot \frac{k-1}{(j+1)(j+k)}.\]
\end{lemma}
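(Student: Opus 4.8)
The plan is to reduce to the two named constraints, solve the resulting two-line Chebyshev problem exactly, and then bound the gap between the two $x$-intercepts using the subinterval endpoints. I would begin with a monotonicity observation: if $S' \subseteq S$ are two systems of constraints, then $\max_{i \in S} |m_i(x-x_{0,i})| \geq \max_{i \in S'} |m_i(x-x_{0,i})|$ pointwise in $x$, and taking $\min_x$ of both sides shows the minimax deviation of $S$ is at least that of $S'$. Hence it suffices to prove the bound for the two-constraint system consisting only of the given constraints, with intercepts $x_1 \in \left[\frac{1200}{j+1},\frac{1200}{j}\right)$ and $x_2 \in \left[\frac{1200}{j+k+1},\frac{1200}{j+k}\right)$ and slopes $m_1, m_2$.

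Next I would solve this two-constraint problem directly. Writing each constraint together with its reflection, the feasible region is $|y| \geq |m_1(x-x_1)|$ and $|y| \geq |m_2(x-x_2)|$, so for fixed $x$ the least feasible $|y|$ equals $g(x) = \max\bigl(|m_1(x-x_1)|,\, |m_2(x-x_2)|\bigr)$. Since $g$ is the maximum of two upward-opening absolute-value functions, it is convex and piecewise linear, and its minimum over $x$ is attained where the two inner branches cross (the branch of the first vertex facing $x_2$ and the branch of the second facing $x_1$). Taking $x_1 < x_2$ without loss of generality and solving $|m_1|(x-x_1) = |m_2|(x_2-x)$ yields the valley value $\frac{|m_1||m_2|}{|m_1|+|m_2|}\,|x_1-x_2|$. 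When the two constraints approach their harmonics from opposite sides—that is, when $m_1$ and $m_2$ have opposite signs, which is exactly the binding over/under pair that determines a competitive minimax—these inner branches coincide with the nominal lines $y = m_i(x-x_i)$, so the valley is simply the intersection of the nominal lines and its height equals $\left|\frac{m_1 m_2}{m_1-m_2}\right|\,|x_1-x_2|$, matching the statement.

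Finally I would bound $|x_1-x_2|$ from below. The two subintervals are disjoint and the second lies entirely to the left of the first, so using the left endpoint of the first interval and the open right endpoint of the second, $|x_1-x_2| > \frac{1200}{j+1} - \frac{1200}{j+k} = 1200\cdot\frac{(j+k)-(j+1)}{(j+1)(j+k)} = 1200\cdot\frac{k-1}{(j+1)(j+k)}$. Combining this with the valley value from the previous paragraph then produces the claimed lower bound.

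I expect the main obstacle to be the careful treatment of the absolute values and sign conventions in the second step: one must verify that the two-constraint minimax is genuinely attained at the inner-branch crossing, and that the stated coefficient $\left|\frac{m_1 m_2}{m_1-m_2}\right|$ (rather than $\frac{|m_1||m_2|}{|m_1|+|m_2|}$) is the correct form for the binding opposite-slope pair, since these two expressions agree precisely when $m_1$ and $m_2$ have opposite signs. A secondary point requiring care is the interval arithmetic producing $k-1$ in the numerator, which hinges on selecting the correct (open versus closed) endpoints of the two subintervals.
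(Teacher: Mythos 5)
Your proposal is correct and takes essentially the same approach as the paper: reduce to the two given constraints, find the crossing of the binding branches (the paper solves $m_1(x-x_1)=m_2(x-x_2)$ after assuming $m_1>0>m_2$ ``without loss of generality''), and bound the intercept gap below by $1200\,\frac{k-1}{(j+1)(j+k)}$ using the closed left endpoint of the first subinterval and the open right endpoint of the second. The sign subtlety you flag at the end is real: the paper's proof, like yours, genuinely covers only the opposite-sign case, since the true valley height is $\frac{|m_1||m_2|}{|m_1|+|m_2|}\,|x_1-x_2|$, which coincides with the stated coefficient $\left|\frac{m_1m_2}{m_1-m_2}\right|$ only when the slopes have opposite signs.
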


\begin{proof}
Let the $x$-intercepts of the constraints be denoted $x_1$ and $x_2$, respectively.  If $m_1$ and $m_2$ have opposite signs, the minimax solution of the system is obtained when $m_1(x-x_1) = m_2(x-x_2)$. If the slopes have the same sign, the minimax deviation is instead obtained when $m_1(x-x_1) = -m_2(x-x_2)$.  Without loss of generality, suppose $m_1>0$ and $m_2<0$.  The minimax solution is given by 
\[x = \frac{m_1x_1 - m_2x_2}{m_1-m_2},\]
which gives rise to a minimax deviation of 
\[\frac{m_1m_2}{m_1-m_2}(x_1-x_2).\]
Since $x_1 \geq \frac{1200}{j+1}$ and $x_2 < \frac{1200}{j+k}$, conclude that $x_1-x_2 > 1200\left(\frac{1}{j+1} - \frac{1}{j+k}\right)$. Algebraic simplification leads to the result.
\end{proof}

\begin{obs}
The lower bound in Lemma 4.3 increases monotonically with $k$.
\end{obs}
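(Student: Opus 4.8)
The plan is to separate out the dependence of the bound on $k$ and verify monotonicity directly. With $j$ fixed and the slopes $m_1,m_2$ regarded as fixed, the factor $1200\left|\frac{m_1 m_2}{m_1-m_2}\right|$ and the factor $\frac{1}{j+1}$ are both constant in $k$. Hence the entire $k$-dependence of the bound is carried by $g(k)=\frac{k-1}{j+k}$, and it suffices to show that this quantity is increasing for $k\geq 1$.

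To that end I would rewrite $g(k)=1-\frac{j+1}{j+k}$. Since $j$ is a fixed positive integer, the subtracted term $\frac{j+1}{j+k}$ has constant numerator and strictly increasing denominator, so it strictly decreases in $k$ and $g$ strictly increases. Equivalently, viewing $k$ as a continuous variable, $g'(k)=\frac{(j+k)-(k-1)}{(j+k)^2}=\frac{j+1}{(j+k)^2}>0$. Restoring the positive constant prefactors then gives monotonicity of the full bound.

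The only point that requires a moment's thought is the implicit role of the slopes, since in the intended application the second $x$-intercept lies in the $(j+k)$-th subinterval, which slides toward $0$ as $k$ grows. I would note that this can only strengthen the conclusion. By Lemma \ref{lemma:subint-bound}, a smaller $x$-intercept forces $|m_2|$ to be larger; and in the opposite-sign case treated in Lemma \ref{lemma:slope-bound} one has
\[\left|\frac{m_1 m_2}{m_1-m_2}\right| = \left(\frac{1}{|m_1|}+\frac{1}{|m_2|}\right)^{-1},\]
which is increasing in $|m_2|$. Thus letting the slope factor vary with $k$ only enlarges the bound further, and the same-sign case is handled identically.

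In short, I do not expect any genuine obstacle: the claim reduces to the elementary monotonicity of a single rational function of $k$. The only real decision in writing the argument is whether to state the observation with the slopes held fixed—in which case the explicit factor $\frac{k-1}{(j+1)(j+k)}$ alone suffices—or to acknowledge their dependence on $k$, which the brief remark above resolves.
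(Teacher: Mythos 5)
Your proposal is correct and takes essentially the same approach as the paper: hold the slope prefactor fixed and verify that the rational factor $\frac{k-1}{(j+1)(j+k)}$ is increasing in $k$, e.g.\ by differentiation. Incidentally, your derivative $\frac{j+1}{(j+k)^2}$ for $\frac{k-1}{j+k}$ is the correct one; the paper's printed partial derivative has numerator $(j+1)(j-1)$ where $(j+1)^2$ should appear, an algebra slip that does not affect its conclusion of nonnegativity.
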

\begin{proof}
The partial derivative of the lower bound with respect to $k$ is
\[1200 \left|\frac{m_1 m_2}{m_1-m_2}\right| \cdot \frac{(j+1)(j-1)}{(j+1)^2(j+k)^2}\]
which is nonnegative for all positive integers $j$.
\end{proof}

The width of the keyboard imposes a lower bound on $x$-intercept, because a smaller $x$-intercept requires a greater number of steps to complete a single octave.  Since the scope of this project is limited to keyboards no greater than 100 keys wide, we need only consider the first 100 subintervals of the octave.  The following fact, determined numerically, resolves the matter of determining whether two $x$-intercepts are sufficiently distant to be discounted.
\begin{fact}
Let $j$ be a positive integer no greater than 100. Suppose a system of linear constraints in the search space has $x$-intercepts in $\left[\frac{1200}{j+1},\frac{1200}{j}\right)$ and $\left[\frac{1200}{j+5},\frac{1200}{j+4}\right)$.  The minimax deviation of the system is no less than 4.268.
\end{fact}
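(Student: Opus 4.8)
The plan is to reduce the statement to a finite numerical search and to explain why the clean bounds of Lemmas \ref{lemma:subint-bound} and \ref{lemma:slope-bound} do not settle it on their own. First I would observe that the minimax deviation of the whole system is at least the minimax deviation of the two-constraint subsystem formed by the two constraints whose intercepts lie in the named subintervals, since imposing the remaining constraints can only raise the smallest achievable $|y|$. It therefore suffices to bound below the minimax deviation of an arbitrary admissible pair, one constraint with intercept $x_1\in[\frac{1200}{j+1},\frac{1200}{j})$ and slope $m_1$, the other with $x_2\in[\frac{1200}{j+5},\frac{1200}{j+4})$ and slope $m_2$. Exactly as in the proof of Lemma \ref{lemma:slope-bound}, minimizing $\max(|m_1(x-x_1)|,|m_2(x-x_2)|)$ over $x$ gives the closed form
\[\frac{|m_1|\,|m_2|}{|m_1|+|m_2|}\,|x_1-x_2|,\]
whose magnitude does not in fact depend on the signs of the slopes. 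The Fact is then equivalent to the assertion that the minimum of this quantity, taken over every admissible pair and every integer $1\le j\le 100$, equals $4.268$.

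Next I would argue that this minimum is the outcome of a finite computation. Each admissible constraint is specified by one of the five odd harmonics $L$, an octave translate contributing a pitch-class distance $d=1200\log_2 L+1200\cdot\textsf{oct}$, and an integer value of \textsf{subdiv} equal to $m$, with intercept $x_0=d/m$. For a fixed subinterval, Lemma \ref{lemma:subint-bound} bounds $|m|$ from below, the keyboard width bounds it above by $99$, and the requirement that $x_0=d/m$ lie in the subinterval together with $|m|\le 99$ confines $|d|$, hence \textsf{oct}, to finitely many values. Thus each subinterval contains only finitely many admissible constraints, and I would enumerate them, form all cross pairs, evaluate the closed form above, and record the minimum across all pairs and all $j$.

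The hard part is that the elementary lower bounds are not tight enough to finish the job directly. Substituting $k=4$ into Lemma \ref{lemma:slope-bound} and using the generic slope estimate $|m|\ge\ceil{\frac{1200}{x_0}\log_2(9/8)}$ of Lemma \ref{lemma:subint-bound} yields a lower bound that decreases with $j$ and drops to roughly $3$ as $j$ approaches $100$, well below the target $4.268$. The reason the true deviations nonetheless stay high is arithmetic: the admissible intercepts are the discrete ratios $d/m$ of the five harmonic distances to integer step counts, and for large $j$ these ratios cannot simultaneously sit near the inner subinterval boundaries while carrying the minimal slopes, so both the separation $|x_1-x_2|$ and the slopes are forced upward. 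I expect the real content to be this finite search: verifying that across all $1\le j\le 100$ no admissible pair beats $4.268$, and exhibiting the single extremal configuration — a near-coincidence of two harmonic approximations whose intercepts fall four subintervals apart — at which $4.268$ is attained. For the smaller values of $j$, where Lemma \ref{lemma:slope-bound} already exceeds $4.268$, that lemma discharges the claim and prunes the search to the remaining values of $j$.
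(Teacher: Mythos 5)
You are correct, and your approach coincides with the paper's: the paper gives no proof of this Fact, stating only that it was ``determined numerically,'' and your plan---reduce to the two-constraint subsystem via the monotonicity of minimax deviation under added constraints, apply the sign-independent closed form $\frac{|m_1|\,|m_2|}{|m_1|+|m_2|}\,|x_1-x_2|$, and enumerate the finitely many admissible constraints per subinterval (slopes bounded below by Lemma \ref{lemma:subint-bound} and above by the keyboard width, with \textsf{oct} then confined to finitely many values by the intercept condition)---is exactly the finite computation such a determination entails. Your further point that Lemmas \ref{lemma:subint-bound} and \ref{lemma:slope-bound} cannot by themselves yield the constant $4.268$, since their combined bound decays to roughly $3$ cents as $j$ approaches $100$, correctly identifies why this statement must be verified against the discrete set of admissible intercepts arising from the five harmonics rather than derived from the general lemmas.
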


\begin{cor} \label{cor:subint-bound}
If a system of linear constraints in the search space has $x$-intercepts at least four subintervals apart, that system has greater harmonic deviation than Secor's Miracle temperament.
\end{cor}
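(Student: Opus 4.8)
The plan is to recognize the corollary as an immediate consequence of chaining together Observation 4.4, Fact 4.5, and one elementary fact about minimax problems. I would begin by translating the hypothesis into the notation of Lemma 4.3: if the $x$-intercepts of two constraints lie at least four subintervals apart, then they occupy subintervals indexed $j$ and $j+k$ for some $k \geq 4$. The key preliminary observation is that the harmonic deviation of the full five-constraint system is the minimax over all five constraints, and for any subcollection $S$ of the constraints one has $\min_x \max_{i \in S} |f_i(x)| \leq \min_x \max_{i} |f_i(x)|$ taken over all five; hence a lower bound on the minimax deviation of the single far-apart pair is automatically a lower bound on the harmonic deviation of the whole system. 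This reduces the problem to bounding a two-constraint minimax.

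Next I would reduce the general case $k \geq 4$ to the boundary case $k = 4$ covered by Fact 4.5. The cleanest route is to bound the intercept separation directly: for $k \geq 4$ we have $x_1 \geq \frac{1200}{j+1}$ and $x_2 < \frac{1200}{j+k} \leq \frac{1200}{j+4}$, so the separation $|x_1 - x_2|$ is no smaller than in a genuine $(j, j+4)$ configuration. Combined with Observation 4.4, which asserts that the Lemma 4.3 lower bound is monotonically increasing in $k$ for fixed slopes, this shows that the worst case over all $k \geq 4$ is realized exactly at $k = 4$. That is precisely the regime treated by Fact 4.5.

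Having reduced to $k = 4$, I would invoke Fact 4.5, whose numerically verified bound of $4.268$ holds for every index $j$ with $1 \leq j \leq 100$. The restriction to $j \leq 100$ is the one forced by the standing assumption that keyboards are no wider than $100$ keys: a smaller $x$-intercept would require more than one hundred steps to traverse a single octave, so subintervals beyond the hundredth never occur in the search space. Finally, since Secor's Miracle temperament attains harmonic deviation $3.322$ cents and $4.268 > 3.322$, any system whose intercepts are at least four subintervals apart has strictly greater harmonic deviation, which is the assertion of the corollary.

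The step I expect to require the most care is the passage from arbitrary $k \geq 4$ back to the base case $k = 4$. Observation 4.4 supplies monotonicity in $k$ only when the slope pair and the index $j$ are held fixed, yet the realizable slopes themselves change as $k$ grows, since by Lemma 4.2 a smaller $x$-intercept forces a larger minimum value of $|\textsf{subdiv}|$. One must therefore confirm that enlarging $k$ cannot shrink the prefactor $\left|\frac{m_1 m_2}{m_1 - m_2}\right|$ below its value in the corresponding $(j, j+4)$ configuration; this holds because that prefactor, which for opposite-signed slopes equals $\frac{|m_1||m_2|}{|m_1|+|m_2|}$, is monotone in each slope magnitude. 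The delicate point is thus that the two monotonicity phenomena — in $k$ via Observation 4.4 and in slope magnitude via Lemma 4.2 — must be combined, rather than used in isolation, to certify that Fact 4.5 governs the entire range $k \geq 4$.
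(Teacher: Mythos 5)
Your proposal is correct and takes essentially the same route as the paper's proof, which simply chains Observation 4.4 (monotonicity in $k$) with Fact 4.5 (the numerical bound of $4.268$ at separation four) and then compares against Secor's deviation of $3.322$. In fact your write-up is more careful than the paper's two-sentence argument: the remark that a pairwise minimax bound automatically bounds the full five-constraint system, and the discussion of how the realizable slopes vary with $k$ while the prefactor $\left|\frac{m_1 m_2}{m_1-m_2}\right|$ remains monotone in the slope magnitudes, are details the paper leaves implicit.
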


\begin{proof}
If the system has $x$-intercepts at least four subintervals apart, Observation 4.4 and Fact 4.5 imply that the minimax deviation of the system is at least 4.268. The corollary follows since Secor's Miracle temperament has harmonic deviation of 3.322.
\end{proof}

While Corollary 4.6 assists in reducing the size of the search space, an additional technique is also available to improve the run time of the algorithm.  Each element of the search space is a system of five linear constraints, from which is computed a minimax value for the generator $x$ and the resulting deviation from the overtones.  Even after pruning the search space based on the previous corollary, a typical element still has deviation greater than Secor's value of 3.322 cents. For this reason, great savings in run time could result by using a search algorithm that avoids most of the uncompetitive elements of the space.  The following observations sets forth the requirements for such an algorithm to be applied.
\begin{obs}
Suppose $D$ represents the minimax deviation of a system of $n$ linear constraints. Imposing an additional constraint cannot result in a minimax deviation of less than $D$.
\end{obs}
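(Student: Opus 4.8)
The plan is to recast the definition of minimax deviation as a clean min-max optimization and then exploit the monotonicity of the pointwise maximum. Writing the original system as $y = m_i(x - x_{0,i})$ for $1 \le i \le n$, the minimax deviation is the smallest magnitude $|y|$ obtainable, as $x$ ranges over all generators, subject to $|y| \ge |m_i(x - x_{0,i})|$ for every $i$. For a fixed $x$ the smallest feasible $|y|$ is exactly $\max_i |m_i(x - x_{0,i})|$, so minimizing over $x$ yields
\[
D = \min_x \, \max_{1 \le i \le n} |m_i(x - x_{0,i})|.
\]
First I would establish this reformulation, since the entire argument rests on viewing the deviation as the value of a min-max problem whose inner objective is a pointwise maximum of finitely many absolute-value functions.

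Next, let the additional constraint be $y = m_{n+1}(x - x_{0,n+1})$, and define
\[
f(x) = \max_{1 \le i \le n} |m_i(x - x_{0,i})|, \qquad g(x) = \max_{1 \le i \le n+1} |m_i(x - x_{0,i})|.
\]
The central observation is that $g(x) \ge f(x)$ for every $x$, because $g$ maximizes over a superset of the terms defining $f$, and enlarging the index set of a maximum cannot decrease it. This pointwise inequality is the engine of the proof.

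Finally, I would pass from the pointwise inequality to the minimized values. Since $g(x) \ge f(x)$ for all $x$, evaluating $f$ at any minimizer $x^\star$ of $g$ gives $\min_x g(x) = g(x^\star) \ge f(x^\star) \ge \min_x f(x)$, so that the new minimax deviation $\min_x g(x)$ is at least $D = \min_x f(x)$. Hence imposing the extra constraint cannot lower the minimax deviation.

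There is no serious analytic obstacle here: as long as one slope is nonzero, each relevant $|m_i(x - x_{0,i})|$ is continuous and coercive, so the minima exist and the min-max is well defined. The only point requiring care is the faithful translation of the informal definition of minimax deviation into the min-max form above; once that is in place, the result reduces to the elementary fact that the minimum of a pointwise-larger function is at least as large.
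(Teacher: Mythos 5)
Your proof is correct, and it takes a cleaner route than the paper's. The paper argues by cases at the original minimizer $x=a$: if the added constraint has magnitude at most $D$ there, it asserts the augmented minimax deviation is still $D$; if it exceeds $D$ there, it asserts the augmented deviation is greater than $D$. Both assertions secretly need a lower bound --- a reason the augmented system cannot achieve a deviation below $D$ at some \emph{other} point $x \neq a$ --- and the paper never supplies it. Your pointwise inequality $g(x) \geq f(x)$, combined with evaluating $f$ at a minimizer of $g$, is exactly that missing ingredient, and it yields the full statement with no case analysis at all. So while both arguments ultimately rest on the fact that enlarging the constraint set can only enlarge the pointwise maximum, you make this explicit and derive the result from it in one line, whereas the paper localizes at $x=a$ and leaves the crucial step implicit. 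Two small remarks: your translation of the paper's definition of minimax deviation into $D = \min_x \max_i |m_i(x - x_{0,i})|$ is faithful and worth stating, as you do; and your appeal to coercivity for existence of minimizers can be avoided entirely by running the last step with infima, since $g \geq f$ pointwise already gives $\inf_x g(x) \geq \inf_x f(x)$.
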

\begin{proof}
Suppose the system of $n$ constraints achieves its minimax value at $x=a$. One of two cases results from imposing an additional constraint. At $x=a$, the new constraint either has magnitude less than or equal to $D$, or it has magnitude greater than $D$.  In the first case, the minimax deviation of the augmented system is still $D$.  In the second case, the minimax deviation of the augmented system is greater than $D$ (although the precise value depends on the specific constraints).
\end{proof}

Consequently, the search space admits partial candidate solutions: systems of two, three, or four linear constraints.  Arranging the five constraint families into rows of a matrix, we visualize the search algorithm as ``percolating'' down from the first row, as follows:
\begin{enumerate}
\item%
Start with any constraint in the first family.
\item%
Add any constraint from the second family.
\item%
Add any constraint from the third family.
\item%
Add any constraint from the fourth family.
\item%
Add any constraint from the fifth (last) family.
\end{enumerate}
At every step of the process, compute the minimax deviation of the current system of constraints. If the deviation of the partial candidate solution is greater than the lowest known deviation, the most recently added constraint should be replaced by another constraint in the same family. If every constraint in the family has been tried, the algorithm retraces one step to the previous constraint family and continues executing from that point.  Because of the existence of partial candidate solutions and Observation 4.7, this backtracking search algorithm is guaranteed to find the element of the search space with lowest minimax deviation \cite{knu68}.

\section{Results and Conclusions}
\subsection{Quantifying Secor's ``Miracle''}
The work of the previous section allowed us to discard most elements of the search space. To obtain additional time savings, a backtracking algorithm formed partial candidate solutions, abandoning branches as soon as the deviation of the partial solution exceeded the current lowest deviation for keyboards of the same dimensions.  For each choice of keyboard dimensions up to $15 \times 100$, the algorithm determined the temperament with minimal harmonic deviation.  In this section, we display the results of the algorithm using a heatmap visualization.  For clarity in explaining the features of the visualization, we initially limit keyboard width to between 12 and 50 keys (Figure \ref{fig:results-initial}).

\begin{figure}[h]
\centering
\includegraphics[width=5.5in]{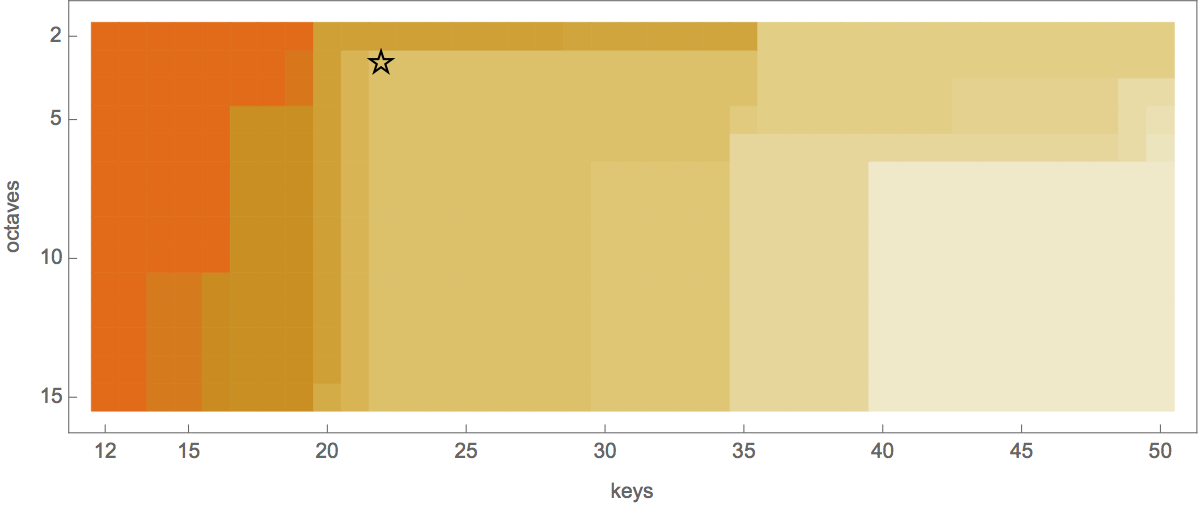}
\caption{Harmonic Deviation by Keyboard Dimension (Sample Results)} 
\label{fig:results-initial}
\end{figure}


Figure \ref{fig:results-initial} consists of rectangular and $\Gamma$-shaped regions of varying shade.  Each region in the figure represents a different temperament.  The lighter the shading of the region, the lower the harmonic deviation of its temperament. The minimum harmonic deviation of a keyboard cannot worsen if the dimensions of the keyboard are increased; consequently, rectangles and $\Gamma$-shapes are the only possible types of regions in the figure. Moreover, the upper-left corner of each region indicates the smallest keyboard dimensions on which that region's temperament can be realized. This distinction is pivotal in the following section's analysis.

Secor's original project determined that for a keyboard of dimensions $3 \times 22$, the optimal temperament is generated by the octave and the interval $(18/5)^{1/19}$, approximately 116.716 cents (Equation \ref{secor-deviation}).  This result is indicated by a star ($\star$) in Figure \ref{fig:results-initial}.  We examined keyboards with dimensions as small as $2 \times 12$ in order to determine whether comparable results to Secor's were achievable on smaller instruments as well.  Our analysis determined that Secor's temperament cannot be realized on a keyboard smaller than $3 \times 22$; this fact is illustrated by the location of the Miracle temperament in the upper-left corner of its region in Figure \ref{fig:results-initial}.  Moreover, we found that keyboards with smaller width than Secor's had substantially worse harmonic deviation.  In this sense, Secor's keyboard width of 22 keys appears to be an important threshold for harmonic deviation.  

\subsection{The Family of Miracle Temperaments}
Figure \ref{fig:results-full} displays the results of our analysis for keyboard widths of 22 to 100 keys, with notable temperaments indicated by star ($\star$) and their harmonic deviations displayed.

\begin{figure}[h]
\centering
\includegraphics[width=6in]{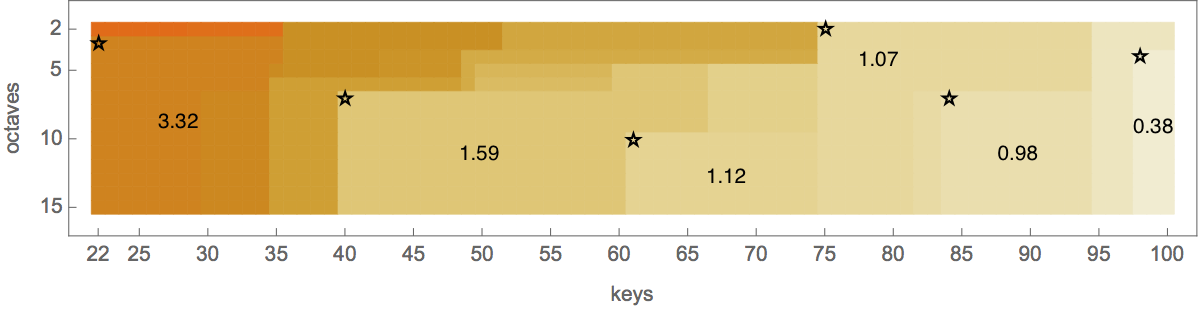}
\caption{Harmonic Deviation by Keyboard Dimension (Full Results)}
\label{fig:results-full}
\end{figure}

The results in Figure \ref{fig:results-full} give rise to the following observation. Secor's temperament is quite worthy of its ``miracle'' designation, considering that the smallest keyboards that realize an improved harmonic deviation have much larger dimensions of $2 \times 36$, $5 \times 35$, and $7 \times 30$.  Since the dimensions of the keyboard affect the musician's ability to manage the instrument, Secor's temperament remains a very good candidate despite the marked reductions in harmonic deviation that have been discovered for larger instruments. By extension, we propose that the term ``miracle temperament'' be applied more broadly to any temperament with the following two properties:
\begin{enumerate}
\item%
The temperament has lower harmonic deviation than any temperament realized on a strictly smaller keyboard.
\item%
A keyboard of substantially greater dimensions is required for any improved temperament to be realized.
\end{enumerate}

Based on Figure \ref{fig:results-initial}, Secor's temperament is arguably the first miracle temperament, since no keyboard of smaller width satisfies property 2.  In this sense, Secor's solution constitutes the first of a family of doubly distinguished temperaments: not only do they represent optimal tuning systems on specific keyboards, they also set the standard for an entire size-class of instruments.  Table \ref{tab:miracle-fam} displays the proposed family of miracle temperaments, which are indicated by star ($\star$) in Figure \ref{fig:results-full}. We hope that these results may be of use to the music composition community in determining instrument dimensions for obtaining optimal consonance.

\begin{table}[h]
\centering
\caption{Proposed Family of Miracle Temperaments.}
\renewcommand{\arraystretch}{1.25}%
\begin{tabular}{c|c|c|c}
Dimensions & Deviation (cents) & Generator (cents) & Numerical Approx. \\ \hline
$3 \times 22$ & 3.322 & $(18/5)^{1/19}$ & 116.716\\ \hline
$7 \times 40$ & 1.586 & $3168^{1/72}$ & 193.823 \\ \hline
$10 \times 61$ & 1.116 & $880^{1/64}$ & 183.400 \\ \hline
$2 \times 75$ & 1.070 & $(14/5)^{1/68}$ & 26.213 \\ \hline
$7 \times 84$ & 0.984 & $(8192/15)^{1/131}$ & 83.296\\ \hline
$4 \times 98$ & 0.384 & $(10/7)^{1/16}$ & 38.593
\end{tabular}
\label{tab:miracle-fam}
\end{table}

\subsection{Universal Miracle Temperaments}
Figure \ref{fig:results-full} suggests that harmonic deviation is more sensitive to keyboard width than height.  While increasing keyboard width from 22 to 100 keys resulted in an 88\% reduction in harmonic deviation, increasing the number of octave rows from 3 to 15 saw no improvement in the temperament.  Unfortunately, the complexity of our algorithm depends more on the number of octave rows than the number of keys, so we are unable to numerically investigate this phenomenon beyond the existing limit of 15 octave rows.  Despite this, we obtain the following result.

\begin{thm} \label{thm:main}
For each $n$, every keyboard with width of $n$ keys and sufficiently many octave rows has the same miracle temperament.
\end{thm}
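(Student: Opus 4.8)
The plan is to show that increasing the number of octave rows eventually stops enlarging the \emph{effective} search space, forcing the optimum to stabilize. First I would record the monotonicity already implicit in the model: for fixed width $n$, enlarging the height $m$ only makes more values of \textsf{oct} available (Lemma 4.1), so each harmonic's family of constraints grows with $m$ and the optimal harmonic deviation $D(m,n)$ is non-increasing in $m$. Since $D(m,n) \ge 0$, it converges to $D^\ast = \inf_m D(m,n)$, the optimum of the ``infinite-height'' problem in which every octave translate is permitted. It then remains to prove that this limit is attained at a finite height and by a single temperament.

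The heart of the argument is a uniform bound on the octave translate needed to realize a best approximation, depending on $n$ alone. I would first reduce the global optimum to the form $\min_x \max_L \delta_L(x)$, where $\delta_L(x)$ is the least attainable deviation of harmonic $L$ at generator $x$; this holds because choosing, within each of the five families, the constraint of least magnitude at a fixed $x$ converts $\min_{\text{system}}\max_{\text{constraint}}$ into $\max_L \min_{\text{family}_L}$. Writing each approximation of $L$ as $c\cdot x + 1200\,r - 1200\log_2 L$ with $c = \textsf{subdiv}$ and $r = -\textsf{oct}$, the width bound gives $|c| \le n-1$, and we may take $x \in (0,1200)$ since $x$ and $x-1200$ generate the same pitch lattice together with the octave. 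For each admissible $c$ the deviation-minimizing row is $r = \operatorname{round}\!\big((1200\log_2 L - cx)/1200\big)$, whence $|r| \le \log_2 11 + (n-1) + 1 < n+4$. Thus every best approximation of every harmonic, at every generator, uses an octave translate of magnitude at most $R(n) := n+4$, independently of $m$.

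I would then set $M(n) := R(n) + \floor{\log_2 11} + 1$ and verify, using the bound $-m+1 \le \textsf{oct} + \floor{\log_2 L} \le m-1$ of Lemma 4.1, that for every $m \ge M(n)$ all constraints with $|r| \le R(n)$ are realizable for each of the five harmonics. Consequently, for $m \ge M(n)$ the function $\delta_L$ coincides with its infinite-height version on the whole range $x \in (0,1200)$: a larger $r$ cannot improve on the minimizing $r$ already present, and a finite board cannot beat the infinite one. Hence $\max_L \delta_L$ and its minimizer are identical for all such $m$, so $D(m,n) = D^\ast$ is attained and the optimal tuning is the same. To conclude that the temperament itself (not merely its deviation) is unchanged, I would argue at the level of the effective search space: the set of constraints that can belong to an optimal system is exactly the finite, $m$-independent collection with $|r| \le R(n)$, so for $m \ge M(n)$ the backtracking algorithm searches the same space and returns the same answer, resolving any ties by the same fixed rule. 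The concrete value $M(n)=n+8$ is immaterial; only the existence of a finite threshold is needed.

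The step I expect to be the main obstacle is the uniform octave-translate bound $R(n)$, and in particular ruling out that a \emph{distant} translate could ever beat the nearest one for best approximation. This is precisely what the $\operatorname{round}$ computation secures: once the step count $c$ is fixed, the lattice point $c\cdot x$ is fixed and only the nearest multiple of the octave matters. Stating this cleanly requires first installing the reduction $x \in (0,1200)$ and the width bound $|c| \le n-1$ so that the bound on $|r|$ is genuinely finite and uniform in $m$; the pruning estimates of Section 4 (Corollary 4.6) may be invoked to confirm that the optimizing generator lies in a narrow band, but they are not strictly necessary for the finiteness conclusion.
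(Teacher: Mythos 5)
Your proposal is correct, but it reaches the theorem by a genuinely different mechanism than the paper. The paper's route (Lemma \ref{lemma:gen-bound} through Corollary \ref{cor:harmonic-dev}) is topological: it confines the normalized generator $r$ to a compact interval, absorbs the octave-row minimization into the continuous distance-to-nearest-integer function $\nint{\log_2 N - kr}$, and invokes the extreme value theorem to obtain an optimal $r^\ast$ for the unrestricted (infinite-height) problem; since the best approximations at $r^\ast$ each occupy one specific row, finitely many rows realize them. You replace compactness with monotone stabilization of the search space: from $|k|\leq n-1$ and $x \leq 1200$ you extract the uniform bound $|r| \leq \log_2 11 + (n-1) + 1$ on any octave translate that can ever be a best approximation, so for all heights beyond an explicit threshold $M(n)\approx n+8$ the per-harmonic deviation functions $\delta_L$ --- and hence the optimizing generator --- coincide exactly with those of the infinite-height problem. (Your interchange of $\min$ over systems with $\max$ over harmonics at fixed $x$ is valid because the five constraint choices are independent, and it is the same reduction the paper performs implicitly in Lemma \ref{lemma:harmonic-dev}.) The trade-off: the paper's argument is shorter, but, as the paper itself concedes, non-constructive; yours is effective, yielding a concrete height after which nothing changes, which in principle converts the paper's open question --- whether the temperaments of Table \ref{tab:miracle-fam} can be improved by additional octave rows --- into a finite computation. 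Your explicit tie-breaking rule also addresses a point both the theorem statement and the paper's proof leave implicit, namely that ``the same temperament'' presupposes a canonical choice when the minimizer is not unique.
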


We will show that for a keyboard of fixed width and arbitrary height, the harmonic deviation of each temperament on that keyboard is a continuous function of the temperament's generator, and that this function's domain is a closed and bounded interval. This implies that finitely many octave rows suffice to obtain the minimum harmonic deviation. The construction of the argument begins with the following lemma, which establishes that the generator of the temperament lies in closed and bounded interval.

\begin{lemma} \label{lemma:gen-bound}
For a keyboard of width $n$, the value of the generating interval is at least $1049.363/(n-1)$ cents, and no more than 1200 cents.
\end{lemma}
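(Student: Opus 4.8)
The plan is to prove the two inequalities separately, working throughout in cents. I fix the octave at $1200$ cents, write $x$ for the generating interval, and note that a keyboard of width $n$ furnishes exactly $n-1$ generator-steps with which to reach a harmonic from the fundamental's column.

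For the upper bound I would argue that the generator may always be taken to be at most an octave. Since the octave is itself a generator, any pitch reachable by a step of size $x$ combined with whole-octave shifts is equally reachable using the reduced step $x - 1200\floor{x/1200}$, so reducing $x$ modulo $1200$ changes neither the tuning nor its harmonic deviation; a generator of exactly $1200$ cents merely duplicates the octave and contributes nothing new. Hence we lose nothing by assuming $0 < x \le 1200$, which is the claimed upper bound and requires essentially no computation.

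The lower bound is the substantive half, and the idea is a spanning argument. Each of the five odd harmonics must be reachable within the $n-1$ available steps. Translating each harmonic to the octave that brings it nearest the fundamental, as tabulated in Table \ref{tab:pitch-class}, the five harmonics occupy cent-positions running from the third harmonic at $-498.045$ up to the eleventh harmonic at $+551.318$. These are the two extreme harmonics, and the interval they bracket has width $551.318-(-498.045)=1049.363$ cents. Since approximations of both extremes must lie on a single keyboard, the $n-1$ steps of size $x$ must together cover this interval, giving $(n-1)x \ge 1049.363$ and hence $x \ge 1049.363/(n-1)$.

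The step I expect to be the main obstacle is justifying that this span of $1049.363$ cents cannot be circumvented. Because $x$ does not evenly subdivide the octave, each harmonic can be approached in more than one octave, so one must confirm that no reassignment of the \textsf{oct} parameters lets all five harmonics be reached within a narrower column range, which would admit a smaller generator. The delicate point is precisely that sending the third or eleventh harmonic to a farther octave increases its own step count yet could, in principle, reduce the overall spread by moving the two extremes to the same side of the fundamental; ruling this out—or, failing that, verifying that the compact configuration of Table \ref{tab:pitch-class} still supplies a valid, if conservative, lower bound sufficient for the boundedness needed in Theorem \ref{thm:main}—is the crux. Once the binding configuration is fixed, the reduction to $1049.363/(n-1)$ is immediate arithmetic.
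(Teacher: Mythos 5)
Your proposal follows essentially the same route as the paper's own proof: the upper bound is the bare assertion that the generator cannot exceed an octave, and the lower bound is exactly the paper's spanning argument, in which the compact configuration of Table \ref{tab:pitch-class} places the harmonics across $[-498.045,\,551.318]$, so that $(n-1)x \ge 1049.363$. The octave-reassignment subtlety you flag as the crux is a genuine issue, but the paper leaves it equally unaddressed---its proof simply cites Lemma \ref{lemma:subint-bound} for the claim that the keyboard ``must span'' this interval---so your write-up is no less complete than the original.
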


\begin{proof}
The lower bound holds by Lemma \ref{lemma:subint-bound}, which establishes that the keyboard must at least span the interval $[-498.045,551.318]$ in order for the 3rd and 11th harmonics to be reached.  The length of this interval is 1049.363 cents, and $n$ keys subdivide the interval into $n-1$ segments.  The upper bound holds because the generating interval cannot be greater than an octave.
\end{proof}

\begin{obs}\label{obs:deviation}
Given a temperament generated by the octave and $x$ (in cents), the deviation from the $N$-th harmonic at the $k$-th step and $m$-th octave row of the temperament is given by
\[
1200(\log_2 \!N - m) - kx.
\]
\end{obs}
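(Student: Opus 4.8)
The plan is to translate both the target harmonic and the approximating note into cents measured above the fundamental, and then to read off the deviation as their signed difference. By the definitions in Section 2, the $N$-th harmonic of a fundamental of frequency $f$ has frequency $N \cdot f$, and so sits $1200 \log_2(Nf/f) = 1200 \log_2 N$ cents above the fundamental. This quantity depends only on $N$, not on the temperament, and it serves as the fixed target for the approximation.

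Next I would locate the approximating note. A note at the $k$-th step of the $m$-th octave row is reached from the fundamental by applying the octave generator $m$ times and the second generator $x$ a total of $k$ times. Since one octave contributes $1200$ cents and each step of the second generator contributes $x$ cents, and since cents add under composition of intervals, this note lies exactly $1200 m + k x$ cents above the fundamental (with negative $k$ encoding descending steps, consistent with the sign convention for \textsf{subdiv} in Section 3.2). The signed deviation of this note from the harmonic is then the difference of the two cent values,
\[
1200 \log_2 N - (1200 m + k x) = 1200(\log_2 N - m) - k x,
\]
which is precisely the claimed formula.

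I do not anticipate any genuine obstacle here; the content is a direct application of the cents and harmonic definitions together with the additivity of cents. The only point requiring care is the orientation convention: the expression records (harmonic) minus (approximation), so it agrees up to an overall sign with the constraint lines of Table \ref{tab:secor-lp}, where each harmonic's reflection $\pm m(x - x_0)$ is also retained for the minimax analysis. For the purposes of Theorem \ref{thm:main} the salient feature is simply that, with $N$, $k$, and $m$ held fixed, the right-hand side is an affine (hence continuous) function of the generator $x$; this is exactly the property that the subsequent continuity-and-compactness argument will exploit to conclude that finitely many octave rows suffice.
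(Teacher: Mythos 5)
Your proposal is correct and matches the paper's own argument: both identify the approximating note as the pitch $m$ octaves plus $k$ steps of $x$ above the fundamental (frequency $f \cdot 2^{kx/1200 + m}$) and compute the signed cent difference from the harmonic $N\cdot f$, the only cosmetic difference being that you subtract the two pitches' cent values above the fundamental while the paper applies the cents formula once to the frequency ratio. These are the same calculation via the additivity of logarithms, so no further comment is needed.
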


\begin{proof}
Let $f$ denote the pitch of the fundamental frequency. The pitch of the $N$-th harmonic is $f \cdot N$, and the pitch at the $k$-th step and $m$-th octave row of the temperament is $f \cdot 2^{kx/1200+m}$. The difference between these pitches is
\[
1200 \log_2\!\left(\frac{f \cdot N}{f \cdot 2^{kx/1200+m}}\right) = 1200(\log_2 \!N - m) - kx.\qedhere
\]
\end{proof}

It is convenient to introduce a change of variables.  In Observation \ref{obs:deviation}, define $r$ as the \emph{normalized generator} with value $r = x/1200$, so that the deviation formula becomes $1200(\log_2 \!N -m -kr)$.  Given arbitrary octave rows (hence arbitrary integer values for $m$), the value of $r$ that minimizes the deviation is that which minimizes $\nint{\log_2 \!N - kr}$, the distance between $\log_2 \!N - kr$ and the nearest integer.  Note that unlike the floor and ceiling functions, this function is continuous on the reals.

\begin{lemma} \label{lemma:harmonic-dev}
For a keyboard of width $n$ and normalized generator $r$, the harmonic deviation is given by
\[
\max_{N\in \{3,5,7,9,11\} }\left\{
\min_{|k|\leq n} \nint{\log_2 \!N -kr}
\right\}.
\]
\end{lemma}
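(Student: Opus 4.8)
The plan is to obtain the displayed min-max expression by unwinding Observation \ref{obs:deviation} together with the two definitions of deviation already in place. I would begin by fixing a single odd harmonic $N \in \{3,5,7,9,11\}$. By Observation \ref{obs:deviation}, after the change of variables $r = x/1200$, the cent-wise signed difference between the $N$-th harmonic and the keyboard note at the $k$-th step of the $m$-th octave row is $1200(\log_2 N - m - kr)$. Since the deviation from a harmonic is defined as the \emph{magnitude} of the difference to that harmonic's closest approximation on the keyboard, and since the keyboard's notes are indexed by the pairs $(k,m)$, the deviation from $N$ is the minimum of $1200\,|\log_2 N - kr - m|$ taken over all admissible $(k,m)$.

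Next I would split this joint minimization into an inner minimization over the octave row $m$ and an outer minimization over the step count $k$. Writing $\alpha = \log_2 N - kr$, the inner quantity $\min_{m\in\mathbb{Z}} |\alpha - m|$ is by definition the distance from $\alpha$ to the nearest integer, namely $\nint{\log_2 N - kr}$, attained at $m = \operatorname{round}(\alpha)$. This is the step requiring the most care: the nearest-integer function ranges $m$ over all of $\mathbb{Z}$, whereas a physical keyboard has only finitely many rows. One must therefore argue that the optimal row is actually realizable. Since only the finitely many step counts with $|k|\le n$ occur and each contributes a single optimal integer row, a keyboard with enough rows realizes every optimal row at once; this is precisely the ``sufficiently many octave rows'' hypothesis of Theorem \ref{thm:main}, and it is the only point at which the keyboard's height enters the argument.

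With the row minimization collapsed onto the nearest-integer function, I would carry out the outer minimization over the steps available on a keyboard of width $n$, namely those with $|k|\le n$. This yields $\min_{|k|\le n}\nint{\log_2 N - kr}$ as the deviation from the $N$-th harmonic. Finally, Definition \ref{def:harmonic-dev} declares the harmonic deviation to be the largest of the deviations from the individual harmonics, so taking the maximum over $N \in \{3,5,7,9,11\}$ produces
\[
\max_{N\in\{3,5,7,9,11\}}\left\{\min_{|k|\le n}\nint{\log_2 N - kr}\right\},
\]
as asserted.

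I expect the only genuine obstacle to be the justification in the second paragraph: the legitimacy of replacing the minimization over the keyboard's finitely many rows by the nearest-integer function, which hinges on the availability of the optimal row once enough rows are present. The remaining content is a routine unpacking of Observation \ref{obs:deviation} and Definition \ref{def:harmonic-dev}, modulo the harmless normalization that suppresses the overall factor of $1200$, so that the expression records deviation in octaves rather than cents.
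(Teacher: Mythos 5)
Your proposal is correct and takes essentially the same route as the paper, whose own proof is a single sentence citing Definition \ref{def:harmonic-dev}; the substance you supply (unwinding Observation \ref{obs:deviation}, collapsing the row minimization onto the nearest-integer function, then taking the min over $|k|\leq n$ and the max over $N$) is exactly what the paper leaves implicit in the text preceding the lemma. Your explicit flagging of the delicate point --- that replacing the finite set of rows by a minimization over all of $\mathbb{Z}$ presupposes arbitrarily many octave rows --- is a more careful treatment than the paper gives, but it is the same argument, not a different one.
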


\begin{proof}
The harmonic deviation of a keyboard is given in Definition \ref{def:harmonic-dev} as the greatest magnitude of deviation from each of the harmonics.
\end{proof}

\begin{cor} \label{cor:harmonic-dev}
For a keyboard of width $n$, there is a well defined normalized generator that minimizes the harmonic deviation, and its value is
\[
\min_{r \in \left[\frac{0.874}{n-1},1\right]} \left\{
\max_{N\in \{3,5,7,9,11\} }\left\{
\min_{|k|\leq n} \nint{\log_2 \!N -kr}
\right\}
\right\}.
\]
\end{cor}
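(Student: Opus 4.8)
The plan is to recognize the displayed expression as the minimum of a continuous function over a compact interval, so that the minimum is attained and equals the stated value by the Extreme Value Theorem. All three ingredients are already in place: Lemma \ref{lemma:gen-bound} pins down the domain, Lemma \ref{lemma:harmonic-dev} supplies the objective, and the continuity of $\nint{\cdot}$ (noted just before Lemma \ref{lemma:harmonic-dev}) provides the needed regularity.

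First I would fix the domain via the change of variables $r = x/1200$. By Lemma \ref{lemma:gen-bound} the generator $x$ lies in $[1049.363/(n-1),\,1200]$, and since $1049.363/1200 = 0.874\ldots$, dividing through by $1200$ shows that $r$ ranges over the closed, bounded, and (for any width $n\ge 2$) non-empty interval $\left[\frac{0.874}{n-1},1\right]$ that appears under the outer minimum. This is where the width $n$, but not the height, enters: a fixed width caps how small the generator can be, while allowing arbitrarily many octave rows is precisely what let us pass, in Observation \ref{obs:deviation} and the discussion preceding Lemma \ref{lemma:harmonic-dev}, to the nearest-integer distance $\nint{\log_2\! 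N - kr}$, which no longer references the octave index $m$.

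Next I would establish that the objective
\[
H(r) \;=\; \max_{N\in\{3,5,7,9,11\}}\left\{\min_{|k|\le n}\nint{\log_2\! N - kr}\right\}
\]
is continuous in $r$. For each fixed $N$ and $k$, the map $r \mapsto \log_2\!N - kr$ is affine, hence continuous, and composing with the continuous function $\nint{\cdot}$ yields a continuous function of $r$. The inner minimum is taken over the finitely many integers $k$ with $|k|\le n$, and a pointwise minimum of finitely many continuous functions is continuous; the identical reasoning handles the outer maximum over the five harmonics. The crucial point is the \emph{finiteness} of both index sets, which is guaranteed by the keyboard's bounded width through Lemma \ref{lemma:subint-bound} and the associated bound on \textsf{subdiv}, since a minimum or maximum over infinitely many continuous functions need not be continuous.

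Finally, $H$ is a continuous real-valued function on a non-empty compact interval, so by the Extreme Value Theorem it attains its infimum at some point of that interval. This shows simultaneously that a minimizing normalized generator is well defined and that the minimum harmonic deviation equals $\min_r H(r)$, which is exactly the displayed expression. I expect no genuine obstacle here: once Lemmas \ref{lemma:gen-bound} and \ref{lemma:harmonic-dev} are granted, the argument is a routine compactness-plus-continuity chain, and the only point demanding real care is confirming that the min and max are taken over finite sets, so that continuity is preserved under both operations.
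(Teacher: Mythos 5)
Your proposal is correct and follows essentially the same route as the paper's own proof: the domain comes from Lemma \ref{lemma:gen-bound} rescaled by $1200$, the objective from Lemma \ref{lemma:harmonic-dev}, and existence of the minimizer from continuity on a closed, bounded interval. The only difference is that you spell out the continuity argument (affine maps composed with $\nint{\cdot}$, preserved under finite min and max) which the paper asserts without detail, and your one inessential slip is attributing the finiteness of the index set $|k|\leq n$ to Lemma \ref{lemma:subint-bound} when it is already built into the statement of Lemma \ref{lemma:harmonic-dev}.
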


\begin{proof}
Lemma \ref{lemma:gen-bound} provides bounds on the generator $x$, which are translated into bounds for the normalized generator by division by 1200.  The value of the normalized generator is well defined because the deviation function of Lemma \ref{lemma:harmonic-dev} is continuous, so it attains its minimum value on a closed and bounded interval.
\end{proof}

\begin{proof}[Proof of Theorem \ref{thm:main}]
Corollary \ref{cor:harmonic-dev} shows that for a keyboard of width $n$, there is a well defined normalized generator that minimizes the harmonic deviation. Because this result was obtained with no restriction on the number of octave rows, it follows that the minimum harmonic deviation for any keyboard of width $n$ is obtained with some finite number of octave rows.
\end{proof}

Due to the non-constructive nature of Theorem \ref{thm:main}, it is not currently known whether any of the miracle temperaments in Table \ref{tab:miracle-fam} can be improved by additional octave rows. In order to reduce the computational complexity of the problem, we believe that the bounds introduced in Section \ref{section:methods} could be substantially tightened. Lemma \ref{lemma:subint-bound} in particular gives a very generous bound on the magnitude of \textsf{subint}, compared to typical values.  Lemma \ref{lemma:slope-bound} and Corollary \ref{cor:subint-bound} derive their strength only from pairwise comparisons of linear constraints despite the fact that three other constraints are also present.  Finally, some direct analysis of the deviation function of Lemma \ref{lemma:harmonic-dev} might result in success, although this is not currently expressed in terms of a linear program.

%
%

\end{document}